\newtheorem{lemma}{Lemma}
\theoremstyle{remark}
\newtheorem{fact}{Fact}
\theoremstyle{definition}
\newcommand\bigcircle[1][2]{{\mathop{\tikz[baseline=-2.8]{\draw[thick](0,0)circle[radius=#1mm];}}}}
\algrenewcommand\algorithmicrequire{\textbf{Input:}}
\algrenewcommand\algorithmicensure{\textbf{Output:}}
\begin{document}
\title{Blindly Factorizing 21 Quantumly}
\author{Aritra Das}
\email{aritra.das@anu.edu.au}
\affiliation{%
Institute for Quantum Science and Technology, University of Calgary, Alberta T2N~1N4, Canada.}
\affiliation{%
	Centre for Quantum Computation and Communication Technology, Department of Quantum Science, Australian National University, ACT 2601, Australia.}
\author{Barry~C. Sanders}
\email{sandersb@ucalgary.ca}
\affiliation{%
Institute for Quantum Science and Technology, University of Calgary, Alberta T2N~1N4, Canada.}
\date{\today}
\begin{abstract}
We develop a classically verifiable scheme
for blindly factorizing the semiprime 21 quantumly
for a classical client who does not trust the remote quantum servers.
Our scheme advances state of the art,
which achieves blind factorization of 15 quantumly,
by increasing the problem to factorizing the next semiprime,
choosing a harder base, executing a non-Clifford gate, and
showing that the security check for 15 also works for 21.
Our algorithmic approach to incorporating non-Clifford operations
sets the stage for scaling blind quantum factorization, whereas
our five-EPR-pair scheme motivates a photonic experiment
that supplants current demonstrations of blind factorization.
\end{abstract}
\maketitle
\section{Introduction}
\label{sec:Intro}
Nowadays commercial quantum computers,
which strive for near-term intermediate-scale quantum advantages~\cite{PreskillNISQ,PreskillSupremacy},
are accessed by the cloud~\cite{Devitt}
raising the problem of a client
Clara (C)
not trusting the remote server(s).
Risk mitigation strategies include blind quantum computing (BQC)
for weakly quantum clients
(using either single-server prepare-and-send
or receive-and-measure protocols)~\cite{Childs, BFK09, Fitzsimons, FK17,BKB+12, GKK19}
and for purely classical clients
(using multi-server entanglement-based protocols)~\cite{GKK19, HZM+17,CCKW21} and quantum homomorphic encryption~\cite{BJ15,OTF18,TFB+20,OTFR20}
whereby~C delegates quantum computation to one or more remote servers, who
are denied key information about the computation~\cite{Fitzsimons}.
Building on successful experimental factorization
of the odd semiprime
(odd integer~$N=pq$ for~$p,q\in\mathbb{P}$ and~$p\neq q$)
$N=15$~\cite{HZM+17},
we devise a protocol for C
to delegate secure factorization of $N=21$~\cite{Martin}
to two remote quantum servers
called Alice (A) \&\ Bob (B).

Our approach extends the
BQC factorization of~15
in two ways~\cite{HZM+17}.
First we increase~$N$ from~15 to
the next odd semiprime number~21.
Second,
we choose a harder base $a$~\cite{SSV13}
for modular exponentiation (modexp) $f(x):=a^x \bmod N$
(where $\gcd(a,N)=1$ with~$\gcd$ denoting greatest common divisor).
The period~$r$ of~$f(x)$
yields a solution~$p=\gcd(a^{\nicefrac{r}2}+1, N)$ when
the following two conditions are simultaneously met:
(i) either~$r$ is even, or~$r$ is odd and~$a$ is a perfect square,
and (ii)~$a^{\nicefrac{r}2}\not\equiv-1\bmod{N}$.
Period finding is sped up
subexponentially by quantum computing~\cite{Shors}.
For~$N=15$ with~$a=11$,
$r=2$ was achieved experimentally~\cite{HZM+17};
in contrast, we treat the hard case~$N=21$ with~$a=4$, for which~$r=3$.
This harder~$a$ requires incorporating a non-Clifford operator,
for which we employ the controlled-controlled-not (C$^2$NOT or Toffoli) gate~\cite{Shi03}.

The remainder of our paper is orgainsed as follows:
In Sec.~\ref{sec:Background}, we summarily recall
some relevant prerequisites to our work
and point to comprehensive resources on key topics.
In Sec.~\ref{sec:Approach}, we describe our methodology to construct
a blind quantum factorization scheme for given~$N$ and~$a$.
In Sec.~\ref{sec:Results} we present a formal algorithm,
along with a function library, to design blind quantum factorization
circuits for arbitrary~$N$ and~$a$
and also present the resulting circuits for two cases of~$N=21, a=4$.
We finish with a discussion on the significance of our results
in Sec.~\ref{sec:Disc} and a conclusion in Sec.~\ref{sec:Conc}.

\section{Background}
\label{sec:Background}

In this section we briefly review
the state-of-the-art in BQC,
blind quantum factorization,
and some other key concepts
that are fundamental to our work.
BQC is a quantum cryptographic protocol that
allows clients with limited or no quantum hardware
to outsource a comptuation to remote quantum server(s)
without revealing information about the computation itself to the server(s)~\cite{BFK09}.
Several BQC protocols have already been developed
and demonstrated for weakly quantum clients~\cite{Childs, BFK09, Fitzsimons, FK17},
but a purely classical client communicating only classically
with a single quantum server might not be able to achieve secure BQC~\cite{Aaronson}.
Nevertheless, this obstacle is overcome
if multiple servers sharing non-local resources are employed~\cite{RUV}.
A brief overview of verifiable BQC can be found in Ref.~\cite{GKK19}.

Secure BQC for completely classical clients, thus,
warrants the remote and classical leveraging of
quantum-advantageous algorithms,
like Shor's factorization~\cite{Shors}
or Grover's search~\cite{Grover},
which serve as prime candidates for delegation~\cite{BKB+12, HZM+17}.
Delegated Shor's factorization is known to be feasible
in the measurement-based quantum computation model~\cite{BFK09}
and has been demonstrated experimentally for $N=15$
in the quantum circuit model~\cite{HZM+17}.
The approach in both these works comprises C
delegating the quantum period-finding subroutine,
which computes the period~$r>1$ of~$f(x)$
for a given odd semiprime~$N=pq$ with unknown~$p$ \&~$q$,
of Shor's algorithm to remote server(s).
From the outputs returned by the servers to her,
C classically computes the factors as
\begin{equation}
	p,q=\gcd(a^{\nicefrac{r}2}\pm 1, N).
\end{equation}

A proof-of-principle implementation of BQC for a completely classical client
has been demonstrated in Ref.~\cite{HZM+17},
wherein Shor's algorithm~\cite{Shors} is executed for factorizing
$N=15$ using verifiable BQC based on the RUV protocol~\cite{RUV}.
This blind quantum factorization was performed for the choice of base~$a=11$,
which results in~$r=2$, thus
making the experimental demonstration sufficiently challenging for a proof-of-concept
but not as realistic as, for example, the case~$a=7$ and~$r=4$ would be.
This is because~$r=2$ implies that the quantum period-finding circuit
has reduced to a classical coin-toss experiment~\cite{SM09}---an anomaly
that can be rationalised as the choice of base~$a=11$
(implicitly) assuming pre-knowledge of the factors~\cite{SM09}.

A pre-knowledgeless factorization scheme would have to
choose a random base~$a$ from some set of allowed bases.
Without any prior ansatz,
such a choice would yield a hard base with high probability;
a hard base implies a period~$r>2$
and a period-finding circuit requiring the multi-qubit Toffoli gate,
which is a non-Clifford operator.
Introducing
a non-Clifford operator
brings
the quantum resource called ``magic''
into play~\cite{MagicState}.
``Magic'' enables quantum circuits to violate conditions for efficient classical simulatability~\cite{Gottesman1, Gottesman2}
so its inclusion is important for
scaling considerations concerning BQC factorization's quantum advantage.
To this end,
we now succinctly summarize the Clifford hierarchy of unitary operators.

The~$n$-qubit Pauli group is
\begin{equation}
\bm{C}^{(1)}_n := \{\pm1, \pm\text{i}\}\times \{ I, X, Y, Z \}^{\otimes n},
\end{equation}
qubits being two-level systems spanned by logical states~$\ket0,\ket1\in\mathscr{H}_2$,
and~$\mathscr{H}_d$ a $d$-dimensional Hilbert space.
Logical states are~$Z$-eigenstates
and comprise our computational basis,
and
\begin{equation}
X,Y,Z\in\mathcal{U}(\mathscr{H}_2)
\end{equation}
are the single-qubit Pauli operators,
with~$\mathcal{U}(\mathscr{H}_d)$ the group of unitary operators on~$\mathscr{H}_d$.
The $n$-qubit Clifford group $\bm{C}^{(2)}_n$
is the normalizer
of the Pauli group, i.e.,
\begin{equation}
\bm{C}^{(2)}_n:=\{ u\in \mathcal{U}(\mathscr{H}_{2^n}) ;  u \,\bm{C}^{(1)}_n u^\dagger\subseteq\bm{C}^{(1)}_n \}.
\end{equation}
This group is generated by the Hadamard, phase (phase-shift by~$\pi/2$) and controlled-not (CNOT) gates.

The Pauli and Clifford groups constitute
the first two levels of the Clifford hierarchy,
and the subsequent levels~$\bm{C}^{(k>2)}_n$  are
defined recursively by~\cite{Gottesman3}
\begin{equation}
\bm{C}_n^{(k)} := \{ u\in \mathcal{U}(\mathscr{H}_{2^n}); u\,\bm{C}^{(1)}_n u^\dagger
\subseteq\bm{C}_n^{(k-1)}\}.
\end{equation}
Conjugation with~$\bm{C}^{(2)}_n$ maps~$\bm{C}^{(1)}_n$ into itself,
so Clifford operators can be blindly delegated
using one-time Pauli pads~\cite{Childs}.
However, $\bm{C}^{(2)}_n$ does not constitute a universal gate-set.
Moreover, stabilizer quantum circuits,
which comprise only Clifford operators
and computational basis measurements
(corresponding to a projective-valued measure
$\ket{\epsilon}\bra{\epsilon}$
for $\ket\epsilon$
a computational basis state)~\cite{Clifford2},
can be simulated
efficiently (polynomial-time)
classically~\cite{Gottesman1, Gottesman2}.
The experimental blind quantum factorization of~15
requires only a stabilizer circuit in its simplest form~\cite{HZM+17}.

In contrast, the C$^2$NOT gate
along with~$\bm{C}^{(2)}_n$
constitutes a universal gate-set~\cite{BMPRV00,Unweyling}.
A circuit comprising both Clifford and C$^2$NOT gates
also circumvents the simulatability theorem~\cite{Gottesman1, Gottesman2}.
However, the inclusion of ``magic''
entails significant resource costs~\cite{Shi03, BJ15}.
Importantly, for our protocol,
only one of A \&\ B needs ``magic''
whereas the other executes a stabilizer circuit,
thereby simplifying the scheme for experimental realization.

\section{Approach}
\label{sec:Approach}

In this section we explain our approach
to solving the blind quantum factorization of~21.
First we describe the setup for blind quantum factorization,
namely the classical client, the bipartite quantum server
and their collective resources.
Next we describe our mathematical representation
of the computation circuit~$\mathcal{C}$ to be delegated to the servers
and~$\mathcal{C}$'s associated representations.
Our scheme for blind quantum factorization
relies upon computation by teleportation
on maximally entangled states, as identified in~\cite{Gottesman3};
in spirit, this is similar to the RUV protocol~\cite{RUV}
but we highlight some important distinctions in Sec.~\ref{sec:Disc}.
Next we establish the mathematical backbone
of our scheme---a procedure to obtain two blind circuits, one for each server,
from the circuit the classical client wishes to execute---via
Lemma~\ref{lemma:concat=simult} and Fact~\ref{fact:transpose=reverse}.
We conclude this section
by reviewing the full procedure to obtain the blind quantum circuits
from the input quantum circuit.

Similar to the BQC factorization of~15,
which we summarize in Fig.~\ref{fig:schemefor15}~\cite{HZM+17},
our scheme is based on the
Reichardt-Unger-Vazirani (RUV) protocol~\cite{RUV}.
The RUV protocol is a multi-round, two-server BQC scheme for a classical client
(as single-server BQC is not secure for classical clients~\cite{Aaronson, Morimae}).
In each round of our protocol, servers~A \&~B
receive~$n$ copies of the entangled two-qubit pair
$\ket{\Phi}:=\ket{00}+\ket{11}\in\mathscr{H}_{4}$
(for~$\ket{00}\equiv\ket0\otimes\ket0$,
and implied normalization employed throughout)
from a periodic source of entanglement, Deborah (D).
Each server receives one qubit from each copy of~$\ket\Phi$
and, thus, A \&\ B collectively share the resource
\begin{equation}
\label{eq:sharedresource}
\ket{\Phi}^{\otimes n}
=(\mathds1\otimes\mathds1 + X\otimes X)^{\otimes n}
\ket{00}^{\otimes n}
\in\mathscr{H}_{4^n}
\end{equation}
but have no other means for communicating~\cite{RUV}.
We index the~$2n$ qubits in~$\ket\Phi^{\otimes n}$
as shown in Fig.~\ref{fig:schemefor15} for~$n=3$.

To delegate an~$n$-qubit quantum circuit~$\mathcal{C}$
in the RUV protocol,
C instructs each server to either compute,
by executing a quantum circuit~($\mathcal{A}$ for A,~$\mathcal{B}$ for B),
or perform the measurement part of a Clauser-Horne-Shimony-Holt (CHSH) test~\cite{CHSH}.
There are, thus, four distinct subprotocols:
A~\&~B could both compute (computational subprotocol),
or both measure (CHSH subprotocol),
or else one computes while the other measures
(two tomography subprotocols)~\cite{RUV}.
When both compute,
A~\&~B report to C their~$i^\text{th}$ $Z$-measurement outcomes
\begin{equation}
\{(a_i,b_i);a_i,b_i\in\{0,1\},i\in[n]:=\{1, \dots, n\}\}.
\end{equation}
From their combined outcomes, C recovers the output from~$\mathcal{C}$
whereas the output from either~$\mathcal{A}$ or~$\mathcal{B}$
alone yields no information about~$\mathcal{C}$ except depth,
thereby blinding~A~\&~B.

\begin{figure}
\begin{tikzpicture}
	\node [anchor=south west,inner sep=0] (image) at (0,0) {\includegraphics[width=6 cm]{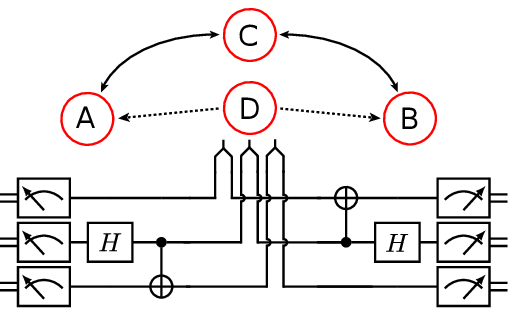}};
	\begin{scope}[x={(image.south east)},y={(image.north west)}]
		\draw [thick, red,rounded corners=3] (-0.01, -0.03) rectangle (0.4, 0.47);
		\draw [thick, red,rounded corners=3] (0.59, -0.03) rectangle (1.01, 0.47);
		\node [black] at (-0.06, 0.24) {$\mathcal{A}$};
		\node [black] at (1.06, 0.24) {$\mathcal{B}$};
		\node [black] at (0.0173, 0.419) {\scriptsize 1};
		\node [black] at (0.017, 0.27) {\scriptsize 2};
		\node [black] at (0.017, 0.122) {\scriptsize 3};
		\node [black] at (0.983, 0.419) {\scriptsize 4};
		\node [black] at (0.983, 0.27) {\scriptsize 5};
		\node [black] at (0.983, 0.122) {\scriptsize 6};
	\end{scope}%
\end{tikzpicture}%
\caption{%
	BQC scheme for factorizing~15.
	Quantum servers A \&\ B
	jointly compute circuits~$\mathcal{A}$ \&~$\mathcal{B}$
	(rounded rectangles), respectively,
	on state~$\ket{\Phi}^{\otimes 3}$
	supplied by entanglement source D,
	and report outcomes to classical client C.
	Each of $\mathcal{A}$ \&~$\mathcal{B}$ involves
	a CNOT and a Hadamard gate ($H$), and~$Z$-basis measurements.
	Solid and dotted arrows represent
	classical and quantum communication,
	respectively,
	arrowheads indicate directionality,
	and numbers represent indices for qubits.}%
\label{fig:schemefor15}%
\end{figure}

Now we discuss the underlying primitive gates
for~$\mathcal{C}$ over~$n$ qubits.
Each computational cycle,
with execution by circuit component~$\mathcal{C}_\nu$,
allows one or more of the following primitive gates
operating in parallel:
\begin{enumerate}[(i)]
\item single-qubit Hadamard~($H$),
\item single-qubit NOT~($X$),
\item two-qubit controlled-rotation~(CR$^k$),
and
\item multi-controlled NOT~(C$^l$NOT,
where~$l\in[n-1]$ denotes the number of controls),
also known as the multi-controlled Toffoli gate.
\end{enumerate}
Our~CR$^k$ gates are restricted to rotations
$\mathrm{R}^k \textcolor{red}{:} \, 0\leq k<n$
that impart a phase of~$\nicefrac{\pi}{2^k}$ on~$\ket1$
and zero phase on~$\ket0$.
Our choice of primitives is natural for Shor factorization~\cite{Shors},
and these primitives are composites of the ``standard set''
of CNOT,~$H$ and the~R$^2$ gate called~$T$,
as described in the Appendix A~\cite{BMPRV00,BBCD+95,MDM05,GKMR14,SM09}.

In Appendix B, we show that
there are at most
\begin{equation}
\left ( \frac{1133233}{8!}  \right) \, n^{\nicefrac{n}{2}} \, n!
\end{equation}
allowed circuit components over~$n$ qubits,
so
we can label~$\mathcal{C}_\nu$ by a bit string~$\bm{B}(\mathcal{C}_\nu)$
with size at most
\begin{equation}
\left\lceil\log \{(\nicefrac{1133233}{8!}) n^{\nicefrac{n}{2}} n! \}\right\rceil \, .
\end{equation}
The depth~$d$ circuit~$\mathcal{C}$
is then a composition of~$d$ circuit components,
\begin{equation}%
\label{eq:cyclecomposition}%
\mathcal{C}%
=\bigcircle_{\nu=1}^d \mathcal{C}_{\nu}%
:= \mathcal{C}_d \circ \cdots \circ \mathcal{C}_1  ,%
\end{equation}%
where~$\mathcal{C}_\nu$ is the circuit component
for the~$\nu^\text{th}$ computational cycle and~$\circ$ denotes composition.
Correspondingly,~$\mathcal{C}$ is represented
by the bit string
\begin{equation}
\bm{B}(\mathcal{C})=\Vert_{\nu=1}^d \bm{B}(\mathcal{C}_\nu),
\end{equation}
where~$\Vert$ denotes concatenation of bit strings.
We also represent each
circuit component~$\mathcal{C}_\nu$
by unitary operator~$G_\nu \in \mathcal{U}(\mathscr{H}_{2^n})$
and represent~$\mathcal C$
by unitary operator~$G\in \mathcal{U}(\mathscr{H}_{2^n})$,
so that
\begin{equation}
G=G_d \, G_{d-1} \cdots  G_1 \, .
\end{equation}

We now focus on RUV-based blind quantum factorization
of odd semiprime~$N$,
wherein factorization circuit~$\mathcal{C}$
acts on two computational registers
of sizes~$t$ and~$L$~\cite{Shors, NielsenChuang}.
Thus, now~$n=t+L$, and the registers are collectively
initialized to~$\ket{0}^{\otimes n}$.
For uncompiled factorization circuits,
$t\geq2\lceil\log{N}\rceil+1$
and~$L\geq\lceil\log{N}\rceil$,
but we employ compilation so these bounds do not apply~\cite{SSV13}.
To convert~$\mathcal{C}$ into
blind circuits~$\mathcal{A}$ \&~$\mathcal{B}$,
we employ a two-step procedure:
first we partition~$\mathcal C$
into a first-stage circuit~$\mathcal{C}_<$
and a second-stage circuit~$\mathcal{C}_>$;
then we convert the sequential computation~$\mathcal{C}_>\circ\, \mathcal{C}_<$
into a bipartite computation~$\mathcal{A}\otimes\mathcal{B}$
on~$\ket\Phi^{\otimes n}$.

In our scheme we require
$\mathcal{A}$ to be a stabilizer circuit,
and~$\mathcal{B}$ to have the minimum possible~$d$.
Thus, before partitioning~$\mathcal{C}$,
we first minimize reduced depth~$d_>$
(defined to be
the number of cycles including
the first non-Clifford cycle and
then all subsequent cycles, whether Clifford or not)
over all circuits that are permutations of the cycles of~$\mathcal{C}$
and are~$\mathcal{C}$-equivalent (i.e., map input to the same output as~$\mathcal{C}$).
In case the minimum reduced depth~$d^*_>$ is not achieved uniquely,
we choose an optimal circuit~$\mathcal{C}^*$.
Then,~$\mathcal{C}_<$ is
the composition of the first~$d-d_>^*$ cycles in~$\mathcal{C}^*$
and~$\mathcal{C}_>$ is
the composition of the last~$d^*_>$ cycles in~$\mathcal{C}^*$.
Thus,
for
\begin{equation}
\mathcal{C}^* = \bigcircle_{\nu=1}^d \mathcal{C}^*_\nu \, ,
\end{equation}
we partition as
\begin{equation}
\label{eq:circuitpartition}
\mathcal{C}_< := \bigcircle_{\nu=1}^{d-d^*_>} \mathcal{C}^*_\nu
\quad  \&\ \quad
\mathcal{C}_> := \bigcircle_{\nu=d-d^*_>+1}^{d} \mathcal{C}^*_\nu  \, ,
\end{equation}
so that
\begin{equation}
\mathcal{C}^* = \mathcal{C}_>\circ\mathcal{C}_< .
\end{equation}
The bit strings~$\bm{B}(\mathcal{C}_<)$
\&~$\bm{B}(\mathcal{C}_>)$
representing~$\mathcal{C}_<$ \&~$\mathcal{C}_>$,
respectively,
are determined by first permuting the component bit strings of~$\bm{B}(\mathcal{C})$
and then partitioning into $\bm{B}(\mathcal{C}_<)\Vert\bm{B}(\mathcal{C}_>)$
following Eq.~\eqref{eq:circuitpartition};
we denote this operation by the bit-string function \textsc{part}.

To establish the second step in our procedure,
we first introduce some notation,
prove a lemma and state a fact.
Below we denote transposition in the computational basis by~${}^\top$.
For~$x=(x_{n}\cdots x_1)\in\{0,1\}^n$, we define
\begin{equation}
\label{eq:Xx}
X^x:=X_{n}^{x_{n}} \otimes \cdots \otimes  X_1^{x_1} \in\mathcal{U}(\mathscr{H}_{2^n}) ,
\end{equation}
where the subscripts below~$X$
indicate the index of the qubit being targeted.
Thus, a computational basis state is
\begin{equation}
\ket{x}=X^x\ket0^{\otimes n}\in \mathscr{H}_{2^n} \, .
\end{equation}
\begin{lemma}
\label{lemma:concat=simult}
For any~$x\in\{0,1\}^n$
and unitary operators~$G_<, G_>, G_A, G_B \in\mathcal{U}(\mathscr{H}_{2^n})$,
the mapping
\begin{equation}
	\label{eq:mapforparallelization}
	X^x G_<^\top \mapsto G_A  \, , \quad X^x G_>\mapsto G_B
\end{equation}
leads to the equality
\begin{equation}
	\label{eq:equationforparallelization}
	X^x G_> G_< X^x\ket0^{\otimes n} = \left( \bra0^{\otimes n} G_A  \otimes G_B\right)\ket\Phi^{\otimes n} .
\end{equation}
\end{lemma}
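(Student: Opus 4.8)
The plan is to verify Eq.~\eqref{eq:equationforparallelization} by substituting the definitions $G_A=X^xG_<^\top$ and $G_B=X^xG_>$ from Eq.~\eqref{eq:mapforparallelization} into its right-hand side and collapsing that expression to the left-hand side using the ``ricochet'' (transpose-trick) property of maximally entangled states that underlies gate teleportation~\cite{Gottesman3}. The one structural observation I need first is that, after regrouping the $2n$ qubits of $\ket{\Phi}^{\otimes n}$ into the $n$-qubit register delivered to A followed by the $n$-qubit register delivered to B, one has
\begin{equation}
\label{eq:bellcomputational}
\ket{\Phi}^{\otimes n}=\sum_{y\in\{0,1\}^n}\ket y\otimes\ket y ,
\end{equation}
and it is on this grouping that the factor $G_A\otimes G_B$ (and the bra $\bra0^{\otimes n}$ on the A-register) acts.

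From Eq.~\eqref{eq:bellcomputational} and $\langle\psi|y\rangle=\overline{\langle y|\psi\rangle}$ I would record the elementary identity
\begin{equation}
\label{eq:ricochet}
\left(\bra{\psi}\otimes M\right)\ket{\Phi}^{\otimes n}=M\ket{\bar\psi}
\end{equation}
valid for any vector $\ket\psi$ and any operator $M$ on the $n$-qubit register, where $\ket{\bar\psi}$ is the entrywise complex conjugate of $\ket\psi$ in the computational basis. I would then apply Eq.~\eqref{eq:ricochet} with $\bra{\psi}=\bra0^{\otimes n}G_A=\bra0^{\otimes n}X^xG_<^\top$ and $M=G_B=X^xG_>$. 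Taking the adjoint and using $(X^x)^\dagger=X^x$ together with $(G_<^\top)^\dagger=\overline{G_<}$ gives $\ket\psi=\overline{G_<}\,X^x\ket0^{\otimes n}$; since $X^x$ and $\ket0^{\otimes n}$ have real entries in the computational basis, conjugating back yields $\ket{\bar\psi}=G_<X^x\ket0^{\otimes n}$. Feeding this into Eq.~\eqref{eq:ricochet} gives
\begin{equation}
\left(\bra0^{\otimes n}G_A\otimes G_B\right)\ket{\Phi}^{\otimes n}=G_B\,G_<\,X^x\ket0^{\otimes n}=X^xG_>G_<X^x\ket0^{\otimes n},
\end{equation}
which is exactly the left-hand side of Eq.~\eqref{eq:equationforparallelization}.

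The only step demanding care is the bookkeeping of conventions: one must keep ${}^\top$ and complex conjugation consistently referred to the computational basis (as stipulated just before the lemma), compute the adjoint of $X^xG_<^\top$ correctly, and be explicit about which half of each Bell pair is labelled A and which B when passing to Eq.~\eqref{eq:bellcomputational}. Once Eq.~\eqref{eq:ricochet} is in place the argument is a one-line substitution, so I anticipate no genuine obstacle beyond this notational hygiene; in particular the normalization conventions of the excerpt make the two sides agree without stray scalar factors.
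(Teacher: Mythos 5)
Your proposal is correct and follows essentially the same route as the paper: the paper's proof is a one-line invocation of the same ``ricochet'' property, $(\,\ket0^{\otimes n}\bra0^{\otimes n}G_A\otimes G_B)\ket\Phi^{\otimes n}=[\,\mathds1\otimes(G_BG_A^\top\ket0^{\otimes n}\bra0^{\otimes n})]\ket\Phi^{\otimes n}$, followed by the substitution $G_A\gets X^xG_<^\top$, $G_B\gets X^xG_>$. Your version merely spells out the same transpose trick in bra--ket form with explicit bookkeeping of the complex conjugation, which is sound given the paper's unnormalized convention $\ket\Phi=\ket{00}+\ket{11}$.
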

\begin{proof}
From the ``ricochet'' property~\cite{Wilde17},
\begin{align}
	\label{eq:ricochetproperty}
	&(\ket0^{\otimes n} \bra0^{\otimes n} G_A \otimes G_B) \left(\mathds1\otimes\mathds1 + X\otimes X\right)^{\otimes n} \nonumber\\
	=&\left [ \mathds 1 \otimes \left (G_B G_A^\top \ket0^{\otimes n} \bra0^{\otimes n}  \right) \right] \left(\mathds1\otimes\mathds1 + X\otimes X\right)^{\otimes n} ,
\end{align}
so assign
$G_A\gets X^x G_<^\top$ and
$G_B\gets X^x G_>$.
\end{proof}
\begin{fact}
\label{fact:transpose=reverse}
As each of our primitive gates admits
a symmetric matrix representation in the computational basis,
the operator $G_<^\top$ represents a circuit~$\mathcal{C}_<^\top$
that consists of the components of~$\mathcal{C}_<$ executed in reverse order, i.e.,
\begin{equation}
	\label{eq:transposedcircuit}
	\mathcal{C}_<^\top := \bigcircle_{\nu=d-d^*_>}^{1} \mathcal{C}^*_\nu \, .
\end{equation}
We denote the operation
of obtaining~$\bm{B}(\mathcal{C}_<^\top)$
by reversing the order of components
in~$\bm{B}(\mathcal{C}_<)$
by the bit-string function~\textsc{rev}.
\end{fact}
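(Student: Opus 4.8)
The plan is to prove Fact~\ref{fact:transpose=reverse} in three moves: (i) verify that every primitive gate listed in Sec.~\ref{sec:Approach} is represented by a symmetric matrix in the (fixed) computational basis, so that transposition ${}^\top$ acts trivially on each primitive; (ii) lift this to each circuit component $\mathcal{C}^*_\nu$ by noting that a component is a tensor product of embedded primitives; and (iii) use the fact that transposition reverses operator products to turn $G_<$ into the reversed product, which is the operator of the reversed circuit.

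First I would run through the primitives one by one. The Hadamard $H$ and the Pauli $X$ are real symmetric $2\times 2$ matrices, so $H^\top=H$ and $X^\top=X$. The controlled-rotation CR$^k$ is diagonal in the computational basis---it fixes every basis state except the one in which control and target are both $\ket1$, which it multiplies by $\mathrm{e}^{\mathrm{i}\pi/2^k}$---hence it is symmetric. The multi-controlled NOT C$^l$NOT is a permutation matrix whose underlying permutation is the transposition swapping the two basis states that agree on all $l$ control qubits (all set to $1$) and differ on the target qubit; a transposition is an involution, so its permutation matrix equals its own transpose. Thus every primitive $g$ satisfies $g=g^\top$, and this persists after embedding $g$ on its qubits within the $n$-qubit register because $I^\top=I$ and transposition commutes with the tensor product.

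Next I would extend this to a full computational cycle and then to $\mathcal{C}_<$. A circuit component $\mathcal{C}^*_\nu$ applies primitives in parallel on disjoint qubit registers, so, up to reindexing the qubits, $G_\nu$ is a tensor product of embedded primitive gates and identities; since $(A\otimes B)^\top=A^\top\otimes B^\top$ and each factor is symmetric by the previous step, $G_\nu^\top=G_\nu$ for every $\nu$. Writing $m:=d-d^*_>$, Eq.~\eqref{eq:circuitpartition} gives $G_<=G_m G_{m-1}\cdots G_1$, and because transposition is an anti-automorphism, $(AB)^\top=B^\top A^\top$, we obtain
\[
G_<^\top=G_1^\top G_2^\top\cdots G_m^\top=G_1 G_2\cdots G_m,
\]
which is precisely the unitary implemented by executing the components $\mathcal{C}^*_1,\dots,\mathcal{C}^*_m$ in reverse order, i.e.\ the circuit $\mathcal{C}_<^\top=\bigcircle_{\nu=m}^{1}\mathcal{C}^*_\nu$ of Eq.~\eqref{eq:transposedcircuit}. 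The statement about \textsc{rev} is the bit-string shadow of this operator identity: since $\bm{B}(\mathcal{C}_<)$ is the ordered concatenation of the component strings $\bm{B}(\mathcal{C}^*_1),\dots,\bm{B}(\mathcal{C}^*_m)$, reversing the order of these blocks yields exactly $\bm{B}(\mathcal{C}_<^\top)$.

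The only real obstacle is the case check in the first move, and in particular seeing that C$^l$NOT is symmetric; I expect to dispatch it via the involutive-permutation observation above rather than by writing out matrices explicitly, after which everything else is routine bookkeeping with $(AB)^\top=B^\top A^\top$ and $(A\otimes B)^\top=A^\top\otimes B^\top$.
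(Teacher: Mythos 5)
Your proposal is correct and takes essentially the same route as the paper, which states this as a \emph{Fact} whose entire justification is the single observation that every primitive gate is symmetric in the computational basis; your case check of $H$, $X$, CR$^k$ (diagonal) and C$^l$NOT (involutive permutation), followed by $(A\otimes B)^\top=A^\top\otimes B^\top$ and $(AB)^\top=B^\top A^\top$, is precisely the expansion of that one-line argument into a full verification.
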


We now explain how we use Lemma~\ref{lemma:concat=simult} and Fact~\ref{fact:transpose=reverse}
to convert~$\mathcal{C}_>\circ\, \mathcal{C}_<$
into~$\mathcal{A}\otimes\mathcal{B}$.
Let~$G_<, G_>, G_A$ and~$G_B$ be unitary operators
representing~$\mathcal{C}_<, \mathcal{C}_>, \mathcal{A}$ and~$\mathcal{B}$, respectively,
so that Map~\eqref{eq:mapforparallelization} implies
\begin{equation}
\mathcal{A}=X^x \circ \mathcal{C}^\top_< \quad \text{\&} \quad \mathcal{B}= X^x \circ \mathcal{C}_> \, .
\end{equation}
Also, consider any~$x\in\{0,1\}^{t+L}$
with~$x_i=0$ for all~$i\in[t]$.
Then,
\begin{equation}
X^xG_>G_< X^x =G_>G_<=G,
\end{equation}
because the second register of~$\mathcal{C}$
is operated on by only C$^l$NOTs~\cite{Shors},
and Eq.~\eqref{eq:equationforparallelization} simplifies to
\begin{equation}
\label{eq:unitaryparallelization}
G \ket 0^{\otimes n} = \bra{x} G_<^\top \otimes X^x G_> \ket\Phi^{\otimes n}.%
\end{equation}
As the output from only the first register of~$\mathcal{C}^*$
is used to compute~$r$ \cite{Shors},
the~$X^x$ in Eq.~\eqref{eq:unitaryparallelization} can be ignored.
Finally, we have
\begin{equation}
\mathcal{A}=\mathcal{C}_<^\top \quad \text{\&} \quad \mathcal{B}=\mathcal{C}_> \,,
\end{equation}
and B's outcomes~$\{b_i; i\in[t]\}$ are identical
to the output from the first register of~$\mathcal{C}$
whenever A reports~$a_i=0$ for all~$i\in[t]$.
This completes our description of
the procedure to obtain~$\mathcal{A}~\&~\mathcal{B}$
from~$\mathcal{C}$.

\section{Results}
\label{sec:Results}

In this section we present our results, which are two-fold.
Firstly, we present a scalable algorithm to design circuits~$\mathcal{A}~\&~\mathcal{B}$
for blind-quantumly factorizing arbitrary odd semiprime~$N$ using arbitrary base~$a$.
Secondly, we present the outputs of this algorithm,
i.e., the factorization circuits~$\mathcal{A}~\&~\mathcal{B}$,
along with the rest of the blind factorization scheme,
for two cases corresponding to~$N=21$ and~$a=4$.
Our algorithm requires  certain standard functions so we start by specifying a function library,
though we don't reproduce the corresponding algorithms here as they are standard in literature.

We now introduce our concept
for the function library~\textsc{funcLib}
for designing circuits~$\mathcal{A}$
\&~$\mathcal{B}$.~\textsc{funcLib}
comprises four functions,
with two of them~(\textsc{part} \&~\textsc{rev}) already discussed and
the other two well established
in literature on factorization~\cite{BCDP, Shors, SSV13}.
The integer function
\begin{equation}
\label{eq:maximumdepthshorcircuitfunction}
\textsc{maxDep}(N, a) = 96 \lfloor\log{a}\rfloor \lfloor\log{N}\rfloor^2
\end{equation}
yields an upper bound
on factorization-circuit depth, given~$N$ and~$a$,
based on complexity arguments for scaling modular exponentiation~\cite{BCDP}.
The bit-string function~\textsc{ShorCir}
returns a bit string representing the compiled factorization circuit,
given~$N, a, t$ and~$n$~\cite{Shors,SSV13}.
Our procedure for designing
circuits~$\mathcal{A}$ \&~$\mathcal{B}$
is described in Alg.~\ref{algorithm:circuitdesign},
where we employ `type' notation USINT for nonnegative integers and
BIN for bit strings (with~[\,] denoting array size).

\begin{algorithm}[H]
\caption{Parallelizing Factorization}
\label{algorithm:circuitdesign}
\begin{algorithmic}[1]
	\Require
	\Statex USINT \textsc{num}
	\Comment $N=pq$,\; $p\neq q$,\; $p,q\in\mathbb{P}\setminus\{2\}$ \;
	\Statex USINT \textsc{base}
	\Comment Base~$a$: $a<N, \, \gcd(a, N) = 1$ \;
	\Statex USINT \textsc{siz1}
	\Comment First-register size of Shor circuit\;
	\Statex USINT \textsc{siz2}
	\Comment Second-register size of Shor circuit\;
	\Ensure
	\Statex BIN[\;] \textsc{circDesA}, \textsc{circDesB}
	\Procedure{\textsc{cirDesign}}{\textsc{num}, \textsc{base}, \textsc{siz1}, \textsc{siz2}}
	\State Import \textsc{funcLib}
	\Comment For functions\;
	\State USINT \textsc{depth}, \textsc{sizeCom}, \textsc{sizCir}
	\State $\textsc{depth}\gets \textsc{maxDep}(\textsc{num}, \textsc{base})$
	\Comment Eq.~\eqref{eq:maximumdepthshorcircuitfunction}
	\State $\textsc{siz2}\gets\textsc{siz1}+\textsc{siz2}$
	\Comment Replace by total register size
	\State $\textsc{sizeCom} \gets\left\lceil\log  ( \frac{133233}{8!}* \textsc{siz2}**{\frac{\textsc{siz2}}{2}} \,* \textsc{siz2}!)\right\rceil$
	\State $\textsc{sizCir}\gets \textsc{sizeCom} * \textsc{depth}$
	\Comment Space-time product
	\State BIN[\textsc{sizCir}] \textsc{cirDes}, \textsc{cirDesL}, \textsc{cirDesG}, \textsc{cirDesA}, \textsc{cirDesB}
	\State $\textsc{cirDes}\gets$~\Call{\textsc{ShorCir}}{\textsc{num}, \textsc{base}, \textsc{siz1},  \textsc{siz2}}
	\State $\textsc{cirDesL}\|\textsc{cirDesG} \gets \textsc{part}(\textsc{cirDes})$
	\Comment Optimal partition
	\State $\textsc{cirDesA}\gets\textsc{rev}(\textsc{cirDesL})$
	\Comment Reverse order
	\State $\textsc{cirDesB}\gets\textsc{cirDesG}$
	\EndProcedure
\end{algorithmic}
\end{algorithm}

Next we describe the computational subprotocol for
our scheme.
Prior to executing the subprotocol,
C runs Alg.~\ref{algorithm:circuitdesign}
to design~$\mathcal{A}$ \&\ $\mathcal{B}$
and sends the output bit strings to the servers.
Then, in every instance of the subprotocol,
C instructs A \&\ B to execute their circuits
and report measurement outcomes.
If~$a_i=0$ for all~$i\in[t]$,
which occurs with probability~$2^{-t}$,
C computes a candidate for~$r$ by
classically processing~$\{b_i; i\in[t]\}$.
For completeness, we describe this classical processing
in Appendix~C.
The exponentially small probability
could be improved by instructing B to
Pauli-correct before computing,
but doing so blindly would entail higher space requirements~\cite{RUV, Gottesman3};
in this work we instead focus on~$N$
with~$t$ sufficiently small for~$2^{-t}$ to be a feasible probability.

This concludes our discussion of the computation subprotocol,
whereas the three other RUV subprotocols
are standard so we describe them
in Appendix D.
In our full multi-round protocol
for blind quantum factorization
(summarized in Appendix D),
C runs one of the four sub-protocols at random,
interacting with the servers as required,
until the correct~$r$ is found.
Although tomographic verification via
the stabilizer framework~\cite{RUV, HZM+17}
is inapplicable here
due to our inclusion of
non-Clifford operators~\cite{RUVProof, JKMW01},
C can classically verify~$r$
in~$\operatorname{polylog}N$ time
by checking~$\gcd(p,N)=p\in\mathbb{P}$~\cite{GKK19}.
\begin{figure}
\begin{tikzpicture}
	\node[anchor=south west,inner sep=0] (image) at (0,0) {\includegraphics[width=0.8 \columnwidth]{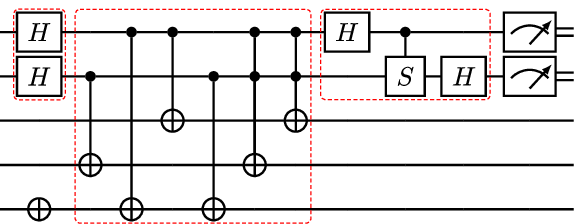}};
	\begin{scope}[
		x={(image.south east)},
		y={(image.north west)}
		]
		\node [black] at (0.71, 1.08) {QFT$^{\dagger}$};
		\node [black] at (0.34, 1.07) {modexp};
		\node [black] at (0.07, 1.07) {QFT};
		\draw[red,thick, densely dashed] (0.41,0.06) -- (0.41,0.92);
	\end{scope}%
\end{tikzpicture}%
\caption{%
	Optimally permuted~$\mathcal{C}^*$ for~$N=21,a=4$, $t=2$ and $L=3$.
	Unlabeled input states are~$\ket0$
	and measurements are in the $Z$ basis.
	Quantum Fourier transform (QFT) is performed using Hadamard~($H$) gates,
	modexp using CNOT and C$^2$NOT gates,
	and inverse QFT~($\text{QFT}^\dagger$) using a controlled-phase gate~($S$) and Hadamard gates.
	The vertical dashed line (red) indicates the optimal partition of the circuit.}%
\label{fig:factorization21with5qubits}
\end{figure}
\begin{figure}
\begin{tikzpicture}%
	\node[anchor=south west,inner sep=0] (image) at (0,0) {\includegraphics[width=0.98\columnwidth]{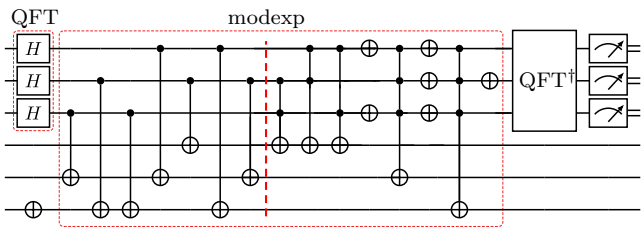}};
	\begin{scope}[
		x={(image.south east)},
		y={(image.north west)}
		]
		\node [black] at (0.048, 1.05) {\footnotesize QFT};
		\node [black] at (0.411, 1.05) {\footnotesize modexp};
		\node [black] at (0.852, 0.748) {\footnotesize QFT$^{\dagger}$};
		\draw[red,thick, densely dashed] (0.410,0.067) -- (0.410,0.94);
	\end{scope}%
\end{tikzpicture}%
\caption{%
	Optimally permuted~$\mathcal{C}^*$ for~$N=21,a=4$, $t=3$ and $L=3$.
	Unlabeled input states are~$\ket0$
	and measurements are in the $Z$ basis.
	QFT is performed using Hadamard~($H$) gates,
	modexp using NOT, CNOT, C$^2$NOT, and C$^3$NOT gates.
	The inverse QFT (QFT$^{\dagger}$) is abridged as a 3-qubit gate for readability.
	The vertical dashed line (red) indicates the optimal partition of the circuit.}%
\label{fig:factorization21with6qubits}
\end{figure}

We now present circuits~$\mathcal{C}^*$,~$\mathcal{A}$ and~$\mathcal{B}$
specifically for~$N=21$ and~$a=4$
with two cases considered,
namely~$t=2$ and~$t=3$,
both with~$L=3$.
The five-qubit~($t=2$) and six-qubit~($t=3$)
optimally permuted circuits~$\mathcal{C}^*$
are shown in Figs.~\ref{fig:factorization21with5qubits}
and~\ref{fig:factorization21with6qubits},
respectively,
where a vertical dashed line indicates the partition
into a first
and a second stage.
For both cases,
the output from the first register
of~$\mathcal{C}^*$
is shown in Fig.~\ref{fig:outputsall} in Appendix C.
For~$t=2$,
the blind circuits~$\mathcal{A}$ \&~$\mathcal{B}$
are shown in Fig.~\ref{fig:blindfact21}.
For~$t=3$,
we do not show~$\mathcal{A}$ \&~$\mathcal{B}$ explicitly
as they could be similarly obtained
from~$\mathcal{C}^*$ in Fig.~\ref{fig:factorization21with6qubits}:
$\mathcal{A}$ is the first stage performed in reverse order
whereas~$\mathcal{B}$ is the second stage.
Both~$\mathcal{A}$ \&~$\mathcal{B}$ terminate with~$Z$ measurements of all qubits.

\section{Discussion}
\label{sec:Disc}

In this section, we discuss the significance
some implications of our results.
Our Alg.~\ref{algorithm:circuitdesign} designs
scalable blind quantum factorization circuits
for arbitrary~$N$ and~$a$,
with expected number of runs growing as~$2^t$
(note that we use compiled circuits,
so~$t$ is not restricted by the~$2 \lceil \log N \rceil+1$ bound).
Alg.~\ref{algorithm:circuitdesign}, via \textsc{Part}, also ensures
that~$\mathcal{A}$ is always a stabilizer circuit,
and thus not resource-intensive to implement.
Moreover,~$\mathcal{A}$ can be implemented fault-tolerantly straightforwardly~\cite{Gottesman3}
and affords complete tomographic verification via just CHSH measurements~\cite{RUV}.

On the other hand, we expect~$\mathcal{B}$ to include non-Clifford operations
and be resource intensve (potentially experimentally infeasible at present).
This can also be seen in the circuits~$\mathcal{C}^*$,~$\mathcal{A}$ and~$\mathcal{B}$
for~$t\in\{2,3\}$.
For~$t=2$,~$\mathcal{C}^*$ has~$d=8$,~$d^*_>=5$
and incorporates three non-Clifford gates
(one~CR$^1$ and two~C$^2$NOTs);
$\mathcal{A}$ \&~$\mathcal{B}$ thus
have depths of three \& five, respectively.
For~$t=3$,~$\mathcal{C}^*$
has~$d=17$,~$d^*_>=13$ and incorporates eight non-Clifford gates
(one~CR$^2$, two~CR$^1$s,
three~C$^2$NOTs and two~C$^3$NOTs);
$\mathcal{A}$ \&~$\mathcal{B}$ thus
have depths of four \& thirteen, respectively.
In both cases,~$\mathcal{A}$
is conveniently a stabilizer circuit
whereas~$\mathcal{B}$ incorporates
all non-Clifford gates in~$\mathcal{C}^*$.

For~$t=2$,
despite following Shor's algorithm~\cite{Martin},
$\mathcal{C}^*$ never yields the correct period
due to insufficient space in the first register;
hence, $\mathcal{A}$ \&~$\mathcal{B}$
fail to factorize~21.
However, the output from~$\mathcal{C}^*$
is sufficient to establish a proof-of-concept as in Ref.~\cite{Martin}.
Further, photonic implementations of~$\mathcal{A}$ \&~$\mathcal{B}$,
which entail scaling up from
one C$^2$NOT to two~\cite{Lanyon}
and from three EPR pairs to five~\cite{HZM+17},
are more feasible
in this case compared to~$t=3$.
For~$t=3$, $\mathcal{C}^*$
delivers the correct period
with probability~0.47
so~$\mathcal{A}$ \&~$\mathcal{B}$
succeed in factorizing with probability~0.058,
but~$\mathcal{B}$ requires significant resources.

Another advantage of our blind quantum factorization scheme
is the low qubit count required to factorize~$N$.
Our scheme requires, at worst,~$\mathrm{O}(\log N)$ qubits to guarantee factorization
compared to the~$\mathrm{O}((\log N)^2)$ qubits
a literal adaptation of the RUV protocol to quantum factorization would require.
Finally, we remedy RUV's open problem of tomographic verifiability
of non-Clifford computations in the context of factorization
by declaring B to be dishonest if C finds A honest
(which requires only CHSH measurments)
but still does not recieve the correct period
(which can be checked efficiently classically)
from B.

\begin{figure}
\begin{tikzpicture}
	\node[anchor=south west,inner sep=0] (image) at (0,0) {\includegraphics[width=0.8\columnwidth]{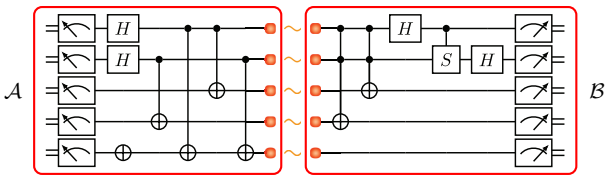}};
	\begin{scope}[%
		x={(image.south east)},
		y={(image.north west)}
		]
		\node [black] at (-0.06, 0.5) {\footnotesize $\mathcal{A}$};
		\node [black] at (1.06, 0.5) {\footnotesize $\mathcal{B}$};
		\draw [thick,red,rounded corners=3] (-0.02, -0.04) rectangle (0.454, 1.04);
		\draw [thick,red,rounded corners=3] (0.501, -0.04) rectangle (1.02, 1.04);
	\end{scope}%
\end{tikzpicture}%
\caption{%
	Blind circuits~$\mathcal{A}$ \&~$\mathcal{B}$
	for $N=21, a=4$, $t=2$ and $L=3$.
	$\mathcal{A}$ \&~$\mathcal{B}$ (rounded rectangles)
	each act on input registers initialized to one half of the bipartite state~$\ket\Phi^{\otimes 5}$ (red dots).
	$\mathcal{A}$ comprises NOT, CNOT, and Hadamard~($H$) gates
	whereas~$\mathcal{B}$ comprises C$^2$NOT, controlled-phase~($S$) and Hadamard gates.
	All measurements are in the~$Z$ basis.}
\label{fig:blindfact21}
\end{figure}

\section{Conclusions}
\label{sec:Conc}

Here
we have developed a BQC scheme
for factorizing~21.
Our multi-round protocol is accessible
to a classical client,
who communicates with
two remote quantum servers.
The servers send
the client~$Z$-measurement results
for each round.
By processing these data,
the client determines candidates
for factors of~21
or verifies honesty of the servers,
all while concealing the actual task.
Our choice of hard~$a$ implies that
servers employ non-Clifford gates,
which is a non-trivial requirement
unseen for $N=15$~\cite{HZM+17};
our non-Clifford analysis
establishes a foundation for
future BQC factorization protocols.
Finally, our protocol for~$t=2$ motivates
a challenging but feasible photonic experiment
that would set a milestone towards
secure quantum computation
for classical clients.

\begin{acknowledgments}
Aritra Das thanks and acknowledges financial support from the Shastri Indo-Canadian Institute through their Shastri Research Student Fellowship program
and from the Australian Research Council Centre of Excellence CE170100012.
Aritra Das is also grateful to the Indian Institute of Technology Kanpur,
where a majority of this work was undertaken.
Finally, Aritra Das and Barry C. Sanders acknowledge the traditional owners of the land on which this work was undertaken at the University of Calgary: the Treaty 7 First Nations.
\end{acknowledgments}

\bibliographystyle{apsrev4-2}

\section*{Appendix A: Decomposition of Primitives into Standard Set}
\label{sec:AppA}
Our primitive gates are composites of
the usual primitive set of CNOT, $H$
and the R$^2$ gate called~$T$,
and we explain briefly how
our gates decompose to this ``standard set''~\cite{BMPRV00}.
A CR$^k$ gate
can be performed using two CNOTs
and two~R$^{k+1}$ gates~\cite{BBCD+95}.
A C$^{l}$NOT (for $l\geq3$) can be
performed using~$l-2$ ``dirty'' ancill\ae\
and~$4l-8$ C$^2$NOTs~\cite{BBCD+95},
or~$l-2$ ``clean'' ancill\ae\
and~$2l-3$ C$^2$NOTs~\cite{MDM05}.
Here, ``clean'' ancill\ae\ are qubits initialised to some computational basis state~$\ket x$,
whereas
``dirty'' ancill\ae\ are in some unknown state in~$\mathscr{H}_2$
to which they must be restored post-computation.
Finally, each C$^2$NOT requires
seven~$T$ gates, two~$H$ gates, one R$^1$ gate and six CNOTs~\cite{GKMR14, SM09}.

\section*{Appendix B: Size of Bit-string Representation}
\label{sec:AppB}
Now we establish an upper bound on
the size of a bit string~$\bm{B}(\mathcal{C}_\nu)$
representing any circuit component~$\mathcal{C}_\nu$
over~$n$ qubits.
An upper bound on this size is given by~$\lceil\log (s(n))\rceil$,
for~$s(n)$ the number of distinct circuit components
(including the identity component~$\mathds{1}^{\otimes n}$)
on~$n$ qubits.
Correspondingly,
we posit an upper bound on~$s(n)$,
from which the bound on size follows.

First consider the closely related quantity~$s'(n)$,
which is the number of distinct circuit components over~$n$ qubits,
ignoring the argument~$k$ of the CR gates.
Then, for~$n\geq 1$,
we have the recurrence relation
\begin{align}
\label{eq:exactrecurrence}
s'(n) = &\; 3 s'(n-1) + 2 (n-1) s'(n-2)  \nonumber\\
&+ \sum_{j=0}^{n-2} \binom{n-1}{j} (n-j) s'(j).
\end{align}
Along with initial conditions $s'(0)=1$ and $s'(n<0) =0$,
Eq.~\eqref{eq:exactrecurrence} can be used to compute~$s'(n)$
for any~$n\geq1$.
Next we prove the bound
\begin{equation}
\label{eq:upperboundnumberofcycles}
s'(n) < c \, n! \quad \forall\,\,  n\in\mathbb{N} ,
\end{equation}
where~$c$ is some positive constant.
Numerical evidence suggests~$s'(n)/n!$ increases up to~$n=8$,
and then decreases monotonically.
Correspondingly, we choose
\begin{equation}
\label{eq:cvalue}
c := \max_{1\leq n\leq8} \frac{s'(n)}{n!} = \frac{1133233}{8!}
\end{equation}
so that inequality~\eqref{eq:upperboundnumberofcycles}
is satisfied for~$n\in[8]$ trivially.

Next we show that inequality~\eqref{eq:upperboundnumberofcycles}
holds for~$n>8$
by using strong induction on~$n$
with base case~$n=8$.
In the inductive step,
we prove that 
inequality~\eqref{eq:upperboundnumberofcycles}
holds for~$n>8$
if it holds for all~$j\in[n-1]$.
From Eq.~\eqref{eq:exactrecurrence} we have
\begin{equation}
s'(n) 	\leq  c \; \left [ 7 (n-1)! + \sum_{j=0}^{n-3} \, \binom{n-1}{j}  (n-j) j! \right ] \, .
\end{equation}
Using $\sum_{j=0}^{n-1} \nicefrac{1}{j!} < e$,
we get
\begin{equation}
\sum_{j=0}^{n-1} \, \binom{n-1}{j} (n-j) j! < 2 e (n-1)!
\end{equation}
so that
\begin{equation}
s'(n)  < c (4+2e)(n-1)! < c n! \, ,
\end{equation}
where this last inequality relies on~$n>8$.

Note that
there are at most~$\lfloor \nicefrac{n}{2}\rfloor$ CR gates
in any~$n$-qubit component
and~$n$ choices of~$k$
($0\leq k<n$)
for each CR gate.
Thus,
\begin{equation}
s(n) \leq  n^{\nicefrac{n}{2}} s'(n) \leq c \, n^{\nicefrac{n}{2}} \, n! \,
\end{equation}
and the size of~$\bm{B}(\mathcal{C}_\nu)$ is at most~$\lceil\log (c \, n^{\nicefrac{n}{2}} \, n!)\rceil$.

\section*{Appendix C: Recovering Factors from Circuit Output}
\label{sec:AppC}
\begin{figure}%
\begin{tikzpicture}%
	\node[anchor=south west,inner sep=0] (image) at (0,0) {\includegraphics[width=180pt]{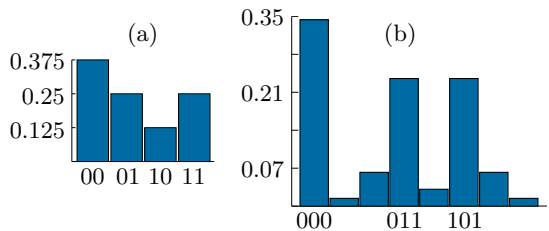}};
	\begin{scope}[%
		x={(image.south east)},
		y={(image.north west)}%
		]%
		\clip (-0.13,-0.13) rectangle (1,1.05);
		\node at (0.69, 0.9) {(b)};
		\node at (0.15, 0.9) {(a)};
		\node  at (0.045,0.16) {$00$};
		\node at (0.12,0.16) {$01$};
		\node at (0.187,0.16) {$10$};
		\node at (0.257,0.16) {$11$};
		\node at (-0.07, 0.4) {$0.125$};
		\node at (-0.058, 0.58) {$0.25$};
		\node at (-0.07, 0.76) {$0.375$};
		\node at (0.510,-0.07) {$000$};
		\node at (0.7,-0.07) {$011$};
		\node at (0.826,-0.07) {$101$};
		\node at (0.402, 0.2) {$0.07$};
		\node at (0.402, 0.596) {$0.21$};
		\node at (0.402, 0.99) {$0.35$};
	\end{scope}%
\end{tikzpicture}%
\caption{%
	Mean probabilities of measurement
	of the first register of $\mathcal{C}^*$
	for (a) $t=2$ and (b) $t=3$.
	The rightmost bit corresponds to the topmost qubit
	in Figs. \ref{fig:factorization21with5qubits}
	\& \ref{fig:factorization21with6qubits}.}
\label{fig:outputsall}
\end{figure}
Here we briefly describe the classical post-processing procedure
to obtain a candidate~$r$,
and thereby factors~$p$ and $q$,
from the first-register output of a factorization circuit~\cite{Shors}.
Given an output~$y\in\{0,1\}^t$ from the first register,
the procedure involves
calculating continued fraction convergents~$\nicefrac{d}{s}$
for~$\nicefrac{y}{2^t}$
such that~$s<N$ and
$\vert\nicefrac{d}{s}-\nicefrac{y}{2^t}\vert<\nicefrac{1}{2^{t+1}}$.
Then,~$s$ is a candidate for~$r$ and
if $a^s=1\pmod{N}$ and if $a^{\nicefrac{s}{2}} \neq -1 \pmod{N}$,~$s$
is the period~$r$.
Finally, the factors of~$N$ are calculated as $p,q=\gcd(a^{\nicefrac{r}{2}}\pm1, N)$.

For~$N=21$,
we show the output from the first register of~$\mathcal{C}^*$
along with their measurement probabilities
in Fig.~\ref{fig:outputsall}(a) and (b)
for the~$t=2$
and the~$t=3$ circuit, respectively.
Post-processing each measurement outcome for~$t=2$,
we find that the~$t=2$ circuit never yields the correct~$r$.
However, the correct period could be obtained via additional heuristic post-processing,
which involves checking whether multiples of~$s$ or~$s\pm1, s\pm2, \dots$ are the period.
Moreover, the measurement outcomes in Fig.~\ref{fig:outputsall}(a) have been experimentally verified~\cite{Martin}.
In contrast, the~$t=3$ circuit yields the correct~$r$ for~$y=011$ and~$y=101$.
Each of these two outcomes occurs with a probability of~0.235,
so the~$t=3$ circuit succeeds in factorizing~21 with probability~0.47.

\section*{Appendix D: Summary of Overall Protocol}
\label{sec:AppD}
In this section, we provide a summary of our protocol
for blind quantum factorization,
which comprises four subprotocols.
We have already discussed the computational subprotocol
in the main text,
whereas our implementation of the three other
subprotocols---the CHSH
and both tomography subprotocols---is standard~\cite{RUV, HZM+17}.
Thus, we first briefly describe these three subprotocols,
and then we outline the overall protocol for C.

In the CHSH subprotocol, C runs multiple rounds of the CHSH game
between the servers~\cite{CHSH, HZM+17}.
CHSH rigidity ensures that,
if the servers win an optimal fraction of rounds,
their shared resource is indeed~$\ket\Phi$ and
they have been honest in computational basis measurements~\cite{RUV, RUVProof}.
The two tomography subprotocols
verify whether one server has computed correctly
by collating the other server's
simultaneous~$X$- or~$Z$-basis measurement outcomes.
Whereas such measurements are sufficient
for state verification in the case
$N=15$
(due to only stabilizer circuits being used),
they are insufficient for exact tomography for general~$N$~\cite{RUVProof, JKMW01}.
Regardless, C can classically verify~$r$
in $\operatorname{polylog}N$ time
by checking~$\gcd(p,N)=p\in\mathbb{P}$~\cite{GKK19}.

The full multi-round protocol to factorize~$N$ blindly
is as follows.
C runs consecutive rounds
until either the correct period is obtained
or any server dishonesty is flagged.
In each round,
she randomly executes one of the four RUV subprotocols.
Specifically,
she executes the computational subprotocol
with some small probability~$\eta$
and the three security subprotocols
(one CHSH and two tomography subprotocols)
with probability~$\nicefrac{(1-\eta)}3$ each.
The optimal choice of~$\eta$
is the sweet spot of factorizing successfully
traded against detecting server malfeasance
and, in practice,
would be obtained by trial and error.
\end{document}